\documentclass[aps,pra,twocolumn]{revtex4}
\usepackage{graphicx}
\usepackage{bm}
\usepackage{dsfont,amsmath,amssymb}
\usepackage{amsthm}

\newcommand{\Tr}{\mathrm{Tr}}
\newcommand{\sgn}{\mathrm{sgn}}
\newcommand{\sinc}{\mathrm{sinc}}
\renewcommand{\Re}{\mathrm{Re}}
\renewcommand{\Im}{\mathrm{Im}}
\newtheorem*{lem}{Lemma}
\begin{document}
\title{Refined Weak Coupling Limit: Coherence, Entanglement and Non-Markovianity}
\author{\'{A}ngel Rivas}
\affiliation{Departamento de F\'isica Te\'orica I, Facultad de Ciencias F\'isicas,
Universidad Complutense, 28040 Madrid, Spain.\\
CCS -Center for Computational Simulation, Campus de Montegancedo UPM, 28660 Boadilla del Monte, Madrid, Spain.}

\date{\today}

\begin{abstract}
We study the properties of a refined weak coupling limit that preserves complete positivity in order to describe non-Markovian dynamics in the spin-boson model. With this tool, we show the system presents a rich and new non-Markovian phenomenology. This implies a dynamical difference between entanglement and coherence: the latter undergoes revivals whereas the former not, despite the induced dynamics being fully incoherent. In addition, the evolution presents ``quasieternal'' non-Markovianity, becoming non-divisible at any time period where the system evolves qualitatively. Furthermore, the method allows for an exact derivation of a master equation that accounts for a reversible energy exchange between system and environment. Specifically, this is obtained in the form of a time-dependent Lamb shift term.
\end{abstract}


\maketitle

\section{Introduction}
The description and characterization of non-Markovian quantum dynamics has been and is an active area of research \cite{LindbladNoMarko,Diosi,BrPe02,GardinerZoller04,Wolf,AlonsoVega,BrLaPi,RHP,Libro,FlemingHu,ChruMani2014,Review,ReviewBLPV,InesReview,Kavan}. Besides the fundamental point of view, this has been motivated by the potential utility of non-Markovian dynamics in different contexts such as quantum metrology and hypothesis testing \cite{Sun,Chin-metrology,Matsuzaki,DAA}, preservation of entanglement and coherence \cite{Huelga,Cialdi,Chen,Xu,Orieux}, and quantum information and computation \cite{Bylicka,Laine,Xiang}.

At sufficiently short times the dynamics of any open quantum system is expected to be non-Markovian \cite{RevKoss,BrPe02,Libro,GardinerZoller04,FlemingHu}. This is because the Born-Markov-secular approximation is no longer valid at a time scale smaller (or of the same order) than the width of the bath correlation functions, and so the evolution is not given by a quantum dynamical semigroup \cite{RevKoss,Alicki}. Nevertheless, provided that the system-bath coupling is small enough to justify the second order perturbation treatment, several approaches have been suggested to deal with the dynamics in the short time scale. For instance, one approach avoids the secular approximation and considers the so-called Redfield equation \cite{Redfield}. However, it has been shown this equation does not preserve positivity \cite{Dumcke} (see also \cite{Whitney,Zhao}). The schemes to overcome this last drawback range from restrictions of valid system states to the subset that remains positive under that dynamical equation \cite{Suarez}, to the inclusion of slippage operators \cite{Gaspard}. Although these proposals can be useful in some situations, they do not provide a completely general and satisfactory answer. For example, they may present problems for multipartite systems \cite{Benatti2003,Benatti2006,Benatti2007}.

Alternatively, in \cite{Schaller} Schaller and Brandes proposed a method they refer to as ``Dynamical Coarse Graining'' which, in principle, allows for a description of the second order dynamics for all time scales in a completely positive way (see also \cite{Libro}). This proposal has been successfully applied to several situations \cite{SKB,Zedler,higher,Benatti}, and it can be seen as a ``refined'' weak coupling limit \cite{Benatti}. However, as far as we know \cite{higher}, low attention has been paid to study whether or not it accounts for the non-Markovian features expected at the short time scale.

In this regard, the non-Markovian properties of the paradigmatic spin-boson model \cite{Leggett,Weiss} are still poorly understood. This model applies to a two-level system interacting linearly with a thermal bath of bosons at some temperature $T$. It plays a central role in solid state physics \cite{Chakravarty,Bulla,Chin}, chemical physics \cite{Garg,Egger,Nitzan}, quantum optics \cite{Cohen,Diego,Recati}, or quantum information technologies \cite{Shnirman,Loss,Niemczyk}. However, the absence of a completely positive description for the spin-boson model out of the Markovian regime makes the analysis in terms of measures of non-Markovianity problematic \cite{ClosBreuer2012}. These allow us to analyze in a quantitative and rigorous way to what extent the model presents non-Markovian behavior. Crucially, the positivity preservation is essential when applying measures of non-Markovianity. They are typically non-linear functions of the dynamics which have only a clear meaning under the presupposition that the dynamics is physical and preserve the positivity of the density matrix. Specifically, this implies the celebrated complete positivity condition in the case of initial system-environment factorization (see e.g. \cite{Libro}).

The objective of this paper comprises both problems by applying the refined weak coupling limit to study in detail the spin-boson model at finite temperature $T$, and examine its non-Markovian features. Particularly, we highlight the following findings:

i) We solve the dynamics of the transverse spin-boson model using the refined weak coupling method and obtain the exact Liouvillian operator for this dynamics. To our knowledge, this represents the most precise positivity-preserving master equation among the ones proposed for this problem.

ii) We find a new time-dependent Lamb shift, describing damped oscillations towards the standard Lamb shift value in the long time scale.

iii) We obtain that non-Markovianity increases for low temperatures and the system presents ``quasieternal'' non-Markovianity at $T=0$. Namely, the dynamics is non-divisible at any time instant during the period of time where the system state changes appreciably.

iv) We show a new dynamical feature between entanglement and coherence. Despite the dynamics being fully incoherent, the non-Markovian evolution may induce re-coherence cycles but does not generate entanglement revivals.

\section{Refined weak coupling limit} 
Though originally exposed in a slightly different terms, the idea behind the refined weak-coupling limit of Schaller and Brandes can be succinctly explained as follows. The exact dynamics of some open system $S$ is formally given by $\rho_S(t)=\Tr_E[U(t,t_0)\rho_S(t_0)\otimes\rho_E(t_0)U^\dagger(t,t_0)]$, with $\rho_S(t_0)$ and $\rho_E(t_0)$ the open system and environmental initial states, respectively; and $U(t,t_0)$ the unitary operator describing the joint evolution of system and environment. For some generic Hamiltonian $H=H_S+H_E+V$, with system ($H_S$) and environment ($H_E$) Hamiltonians, and interaction term $V$, the system evolution in the interaction picture and up to second order $V$ (or equivalently for short times) can be written as $(t_0=0)$
\begin{align}\label{DCG1}
  \tilde{\rho}_S(t)&=\rho_S(0)\nonumber \\
  &-\frac{1}{2}\mathcal{T}\int_0^tdt_1\int_0^tdt_2\Tr_E\left[\tilde{V}(t_1),\left[\tilde{V}(t_2),\rho_S(0)\otimes\rho_\beta\right]\right] \nonumber \\
  &+\mathcal{O}(V^3),
\end{align}
where $\tilde{X}(t)$ stands for the interaction picture of the operator $X$, $\mathcal{T}$ is the time-ordering operator, and we have already assumed the environment to be in thermal equilibrium (bath) $\rho_E(0)=\rho_\beta=\exp(-\beta H_E)/\Tr[\exp(-\beta H_E)]$. By applying $\mathcal{T}$ under the integral signs and reordering terms we obtain
\begin{align}\label{DCG2}
\tilde{\rho}_S&(t)\simeq \rho_S(0)-i[\Lambda(t),\rho_S(0)] \\
&+\Tr_E\big[W(t)\rho_S(0)\otimes\rho_\beta W(t)-\tfrac{1}{2}\big\{W^2(t),\rho_S(0)\otimes\rho_\beta\big\}\big]\nonumber,
\end{align}
with Hermitian operators $\Lambda(t)=\tfrac{1}{2i}\int_0^tdt_1\int_0^tdt_2\sgn(t_1-t_2)\Tr_E[\tilde{V}(t_1)\tilde{V}(t_2)\rho_\beta]$ and $W(t):=\int_0^t\tilde{V}(t')dt'$. Then, by writing $V=\sum_k A_k\otimes B_{k}$ with also Hermitian $A_k$ and $B_k$ and after a bit of algebra in Eq.~\eqref{DCG2} we find (for further details, see Appendix \ref{app_B})
\begin{align}\label{Z}
  \tilde{\rho}_S(t)&\simeq \rho_S(0)-i[\Lambda(t),\rho_S(0)]\nonumber \\
  &+\sum_{\omega,\omega'}\sum_{k,l} \Gamma_{kl}(\omega,\omega',t)\big[A_l(\omega')\rho_S(0) A_k^\dagger(\omega)\\
  &-\tfrac{1}{2}\{A_k^\dagger(\omega) A_l(\omega'),\rho_S(0) \}\big]\equiv\rho_S(0)+\mathcal{Z}(t)[\rho_S(0)].\nonumber 
\end{align}
where we have used the decomposition of $A_k=\sum_{\omega}A_k(\omega)$ in eigenoperators of the system Hamiltonian, $[H_S,A_k(\omega)]=-\omega A_k(\omega)$, and 
\begin{align}
\Gamma_{kl}(\omega,\omega',t)&=\textstyle\int_{0}^{t}dt_1\int_{0}^{t}dt_2 e^{i(\omega t_1-\omega' t_2)}\Tr[\tilde{B}_k(t_1-t_2)B_l\rho_\beta].
\end{align}
Similarly, the Hamiltonian correction becomes $\Lambda(t)=\sum_{\omega,\omega'}\sum_{k,l}\Xi_{kl}(\omega,\omega',t)A_k^\dagger(\omega) A_l(\omega')$ with
\begin{multline}
\Xi_{kl}(\omega,\omega',t)=\tfrac{1}{2i}\textstyle\int_{0}^{t}dt_1\int_{0}^{t}dt_2\sgn(t_1-t_2) \\
\cdot e^{i(\omega t_1-\omega' t_2)}\Tr[\tilde{B}_k(t_1-t_2)B_l\rho_\beta].
\end{multline}
From Eq. \eqref{DCG2} we infer that $\mathcal{Z}(t)$ has the GLKS form \cite{Koss-Lind}, so it turns out that the coefficients $\Gamma_{k,l}(\omega,\omega',t)$ form a positive-semidefinite matrix. Despite the fact that $\mathcal{Z}(t)$ can be seen (for fixed $t$) as the generator of a dynamical semigroup, Eq.~\eqref{Z} does not provide a completely positive (CP) dynamics as the positivity condition can be violated at order $V^3$. Nevertheless, for weak coupling (or for short times), we can safely approximate the dynamics by $e^{\mathcal{Z}(t)}$, which is indeed CP because the GKSL form of $\mathcal{Z}(t)$. Crucially, it can be proven \cite{Schaller} that for long times $\mathcal{Z}(t) \approx \mathcal{L}_{D} t$, where $\mathcal{L}_D$ is the standard generator of the weak coupling limit \cite{Davies,BrPe02,GardinerZoller04,Libro}. Thus, the refined weak coupling limit consists in taking the quantity $\mathcal{Z}(t)$, which we refer to as the \emph{Schaller-Brandes exponent}, and describing the evolution by $e^{\mathcal{Z}(t)}\rho_S(0)$. This is CP for all times, gives the exact correct dynamics at short times, and reproduces the celebrated Born-Markov-secular generator for long times.

Finally, one may ask about the Liouvillian operator $\mathcal{L}_Z(t)$ such that the solution to the differential equation $\tfrac{d \tilde{\rho}_S(t)}{dt}=\mathcal{L}_Z(t)[\tilde{\rho}_S(t)]$ gives the refined weak coupling evolution $\tilde{\rho}_S(t)=e^{\mathcal{Z}(t)}\rho_S(0)$. To this end, we write
\begin{align}
\tfrac{d\tilde{\rho}_S(t)}{dt}&=\left[\tfrac{d}{dt}e^{\mathcal{Z}(t)}\right]\rho_S(0)=\left\{\left[\tfrac{d}{dt}e^{\mathcal{Z}(t)}\right]e^{-\mathcal{Z}(t)}\right\}\tilde{\rho}_S(t)\nonumber\\
&\Rightarrow\mathcal{L}_Z=\left\{\left[\tfrac{d}{dt}e^{\mathcal{Z}(t)}\right]e^{-\mathcal{Z}(t)}\right\}.
\end{align}
Combining this with the well-known identity \cite{dExpSM} for the derivative of the exponential of an operator, $\frac{d}{dt}e^{\mathcal{Z}(t)}=\int_0^1ds e^{s\mathcal{Z}(t)} \left[\frac{d\mathcal{Z}(t)}{dt}\right] e^{(1-s)\mathcal{Z}(t)}ds$, we obtain the Liouvillian from the Schaller-Brandes exponent by means of the relation
\begin{equation}\label{LiouvillianFormula}
\mathcal{L}_Z=\int_0^1ds e^{s\mathcal{Z}(t)} \left[\frac{d\mathcal{Z}(t)}{dt}\right] e^{-s\mathcal{Z}(t)}ds.
\end{equation}
Of course, this refined weak coupling Liouvillian satisfies $\lim_{t\rightarrow\infty}\mathcal{L}_Z(t)=\mathcal{L}_D$.

\begin{figure*}[t]
	\includegraphics[width=\textwidth]{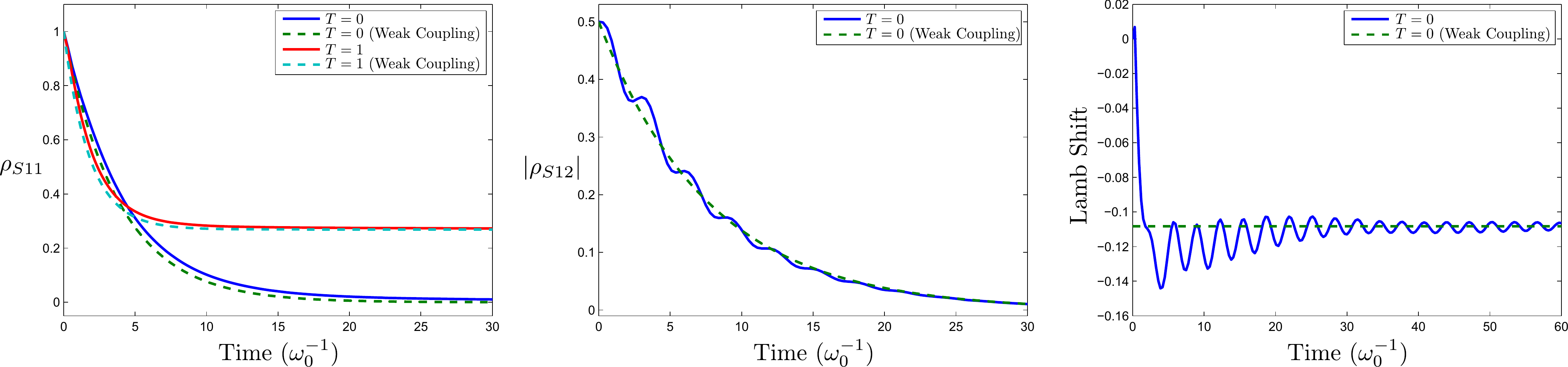}
	\caption{Dynamics in the refined weak coupling limit. The excited state population (left) shows a similar decay as for the standard weak coupling case, even in the short-time regime. The amount of coherence encoded in the off-diagonal element $\rho_{S12}$ (middle) undergoes oscillations approaching the weak coupling value in the long time scale. The Lamb shift becomes time-dependent (right); it presents long lived small oscillations towards the static weak coupling value in the asymptotic limit. For these computations we have taken an Ohmic spectral density with exponential cut-off, and parameters $\alpha=0.05$ and $\omega_c=5\omega_0$ (see main text).}
	\label{fig_1}
\end{figure*}

\section{Spin-boson model in the refined weak coupling limit} 

For the spin-boson model the system, environment, and interaction Hamiltonians are given by $H_S=\tfrac{\omega_0}{2}\sigma_z$, $H_E=\sum_k\omega_k a_k^\dagger a_k$, and $V=\sum_k g_k\sigma_x(a_k+a_k^\dagger)$, respectively, with Pauli matrices $\sigma_x$ and $\sigma_z$, and bosonic bath operators $a_k$. In this case the eigenoperators are $A_1(\mp\omega_0)=\sigma_\pm=(\sigma_x\pm i\sigma_y)/2$, and the computation of the Schaller-Brandes exponent for this model yields (Appendix \ref{app_B}, Sec. 3)
\begin{align}\label{Zsb}
\mathcal{Z}(t)[\rho_S]=&-i[\Xi(t,T)\sigma_z,\rho_S]\nonumber \\
&+\sum_{\mu,\nu=+,-}\Gamma_{\mu\nu}(t,T)[\sigma_\nu\rho_S\sigma_\mu^\dagger-\{\sigma_\mu^\dagger\sigma_\nu,\rho_S\}].
\end{align}
Here the coefficients are given by
\begin{widetext}
\begin{align}
&\Xi(t,T)=\frac{1}{4\pi}\int_{-\infty}^\infty d\omega t^2\left\{\mathrm{sinc}^2\left[\tfrac{(\omega_0-\omega)t}{2}\right]-\mathrm{sinc}^2\left[\tfrac{(\omega_0+\omega)t}{2}\right]\right\}\left\{\mathrm{P.V.}\int_0^\infty d\upsilon J(\upsilon)\left[\tfrac{\bar{n}_T(\upsilon)+1}{\omega-\upsilon}+\tfrac{\bar{n}_T(\upsilon)}{\omega+\upsilon}\right]\right\},\\
&\Gamma_{--}(t,T)=\int_0^\infty d\omega t^2J(\omega)\left\{[\bar{n}_T(\omega)+1]\mathrm{sinc}^2\left[\tfrac{(\omega_0-\omega)t}{2}\right]+\bar{n}_T(\omega)\mathrm{sinc}^2\left[\tfrac{(\omega_0+\omega)t}{2}\right]\right\},\\
&\Gamma_{++}(t,T)=\int_0^\infty d\omega t^2J(\omega)\left\{[\bar{n}_T(\omega)+1]\mathrm{sinc}^2\left[\tfrac{(\omega_0+\omega)t}{2}\right]+\bar{n}_T(\omega)\mathrm{sinc}^2\left[\tfrac{(\omega_0-\omega)t}{2}\right]\right\},\\
&\Gamma_{+-}(t,T)=\Gamma_{-+}^\ast(t,T)=\int_0^\infty d\omega t^2J(\omega)[2\bar{n}_T(\omega)+1]e^{-i\omega_0t}\mathrm{sinc}\left[\tfrac{(\omega_0+\omega)t}{2}\right]\mathrm{sinc}\left[\tfrac{(\omega_0-\omega)t}{2}\right],
\end{align}
\end{widetext}
where $J(\omega)$ is the spectral density of the bath, $\bar{n}_T(\omega)=[\exp(\omega/T)-1]^{-1}$ is the mean number of bosons in the bath with frequency $\omega$, and $\mathrm{sinc}(\omega):=\tfrac{\sin \omega}{\omega}$.

In Fig. 1 (left and middle) we represent the population $\rho_{S11}$ and coherence $|\rho_{S12}|$ in the refined weak coupling. We compare them with their values in the standard weak coupling (semigroup $e^{\mathcal{L}_D t}$) for different temperatures and for an Ohmic spectral density $J(\omega)=\alpha \omega e^{-\omega/\omega_c}$ $(\alpha=0.05, \omega_c=5\omega_0)$. It can be seen both dynamics differ in the small time regime where the standard weak coupling limit fails, but approach the same value for long times as expected.

Notably, it is possible to obtain a closed expression for the refined weak coupling Liouvillian for the spin-boson model. This is so because the different summands in the Schaller-Brandes exponent, Eq. \eqref{Zsb}, close a Lie algebra. This, jointly with Eq. \eqref{LiouvillianFormula}, leads to a Liouvillian with the same form as $\mathcal{Z}(t)$:
\begin{align}\label{LZ}
\frac{d\tilde{\rho}_S}{dt}&=\mathcal{L}_Z(t)[\rho_S]=-i[\Delta(t,T)\sigma_z,\rho_S]\nonumber \\
&+\sum_{\mu,\nu=+,-}\gamma_{\mu\nu}(t,T)[\sigma_\nu\rho_S\sigma_\mu^\dagger-\{\sigma_\mu^\dagger\sigma_\nu,\rho_S\}],
\end{align}
where the explicit expressions of $\Delta(t,T)$ and $\gamma_{\mu\nu}(t,T)$ are provided in Appendix \ref{app_A}. This is a very remarkable result, because, to the best of our knowledge, this is the most accurate master equation for the weakly coupled spin-boson model that guarantees complete positivity. Furthermore, it allows one to study how decay rates and energy shifts vary as a function of the time in the short time scale. For instance ,in Fig. 1 (right), we depict the evolution of the refined weak-coupling Lamb shift $\Delta(t,0)$. It is shown the way that the energy levels are initially renormalized and reach, after some transient, the standard Lamb shift as computed by the weak coupling procedure. Notably, the oscillations of $\Delta(t,0)$ decay very slow, and account for a reversible exchange of energy between system and environment not predicted with the standard semigroup approach. The experimental determination of $\Delta(t,0)$ may be used as an indicator for the time when system and environment started interacting and the strength of this interaction. 

\begin{figure*}[t]
	\includegraphics[width=\textwidth]{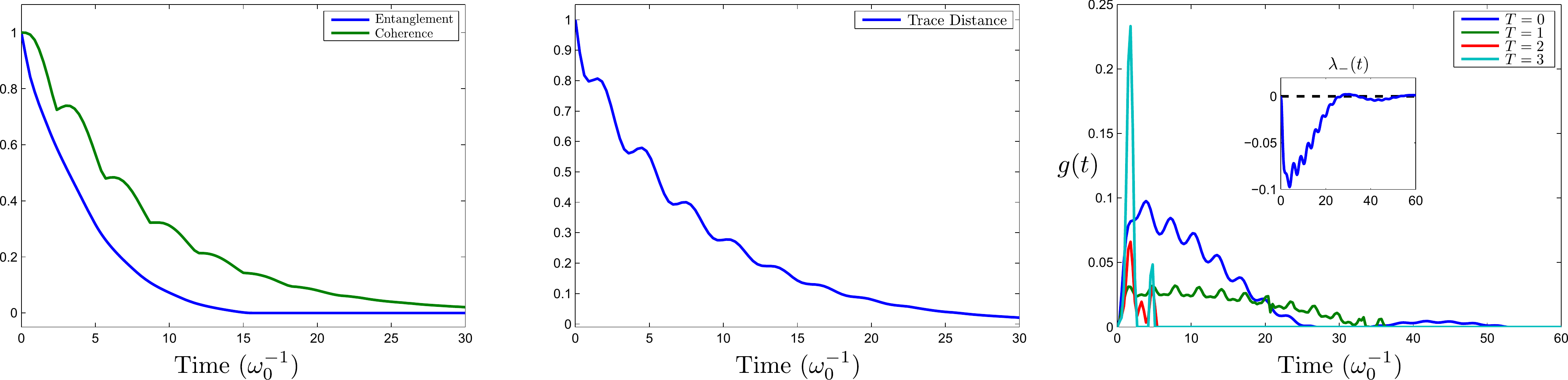}
	\caption{Non-Markovian features of the spin-boson model in the refined weak coupling limit. The entanglement (logarithmic negativity) between the system and an inert ancilla initially prepared in a maximally entangled state $|\Phi\rangle$ decays monotonically. However, the coherence ($l_1$-measure of coherence \cite{QC}) of the same state presents revivals (left). The trace distance between the two $\pm1$-eigenstates of $\sigma_y$ also shows a non-monotonic decay (middle). The $g(t)$ function \cite{RHP} is also plotted for several temperatures (right). As expected, the dynamics is non-Markovian (non-divisible) in the short-time scale. For $T=0$ the dynamics is non-divisible at any time instant in the period where the system evolves qualitatively (up to $27\omega_0^{-1}$ in the plot). The inset plot in the right part shows the time evolution of the smallest canonical decay rate. For these computations we have taken an Ohmic spectral density with exponential cut-off, and parameters $\alpha=0.05$, and $\omega_c=5\omega_0$ (see main text).}
	\label{fig_2}
\end{figure*}

\section{Non-Markovianity in the spin-boson model}
It is easy to check that for any diagonal state $\rho_{\rm d}$ in the basis of eigenstates of $H_S$, $\mathcal{Z}(t)[\rho_{\rm d}]$ is also a diagonal state. Thus $e^{\mathcal{Z}(t)}$ is an incoherent operation \cite{QC} and any measure of coherence must decrease monotonically for a Markovian (or CP-divisible) $e^{\mathcal{Z}(t)}$ \cite{inco}. As exemplified in Fig. 1 (middle), the absolute value of the nondiagonal component $|\rho_{S12}|$ is not monotonically decreasing, and since $|\rho_{S12}|$ is indeed a measure of coherence for a qubit \cite{QC}, we can certainly assert that the dynamics we are studying is non-Markovian.

Actually, the spin-boson model in the refined weak coupling limit presents a rich and odd non-Markovian phenomenology. A remarkable feature is that entanglement, as a difference of coherence, does present non-Markovian effects which could be used to witness non-Markovianity \cite{RHP}. More specifically, entanglement between the spin system and an inert ancilla decreases monotonically with time [see Fig. 2 (left)]. However the amount of coherence of the same state presents revivals. This strange phenomenon differentiates in a dynamical way the concepts of entanglement and coherence. In addition, the oscillatory behavior is also shared by other non-Markovinanity witnesses as the trace distance \cite{BrLaPi}, see Fig. 2 (middle). Note that in our computations we have taken the logarithmic negativity \cite{LogNeg} and the $l_1$-measure of coherence \cite{QC}, which are parallel proposals for quantifying entanglement and coherence, respectively.

The instantaneous amount of non-Markovianity can be quantified by means of the function $g(t)$ as defined in \cite{RHP}, which in terms of the Liouvillian $\mathcal{L}_Z$ has the form
\begin{equation}
g(t)=\lim_{\epsilon\rightarrow0^+}\frac{\|[\mathds{1}+\epsilon\mathcal{L}_Z(t)\otimes\mathds{1}]|\Phi\rangle\langle\Phi|\|_1-1}{\epsilon}\geq0,
\end{equation}
where $\|\cdot\|_1$ denotes the trace norm and $|\Phi\rangle$ is the maximally entangled state between the system and some ancilla of the same dimension. Alternatively, this function can also be obtained by computing the canonical decay rates of the Liouvillian \cite{MichaelHall}. One easily obtains $g(t)=\tfrac{1}{2}[|\lambda_+(t)|-\lambda_+(t)+|\lambda_-(t)|-\lambda_-(t)]$, where the canonical decay rates are
\begin{equation}
  \lambda_{\pm}(t)=\tfrac{\gamma_{++}(t)+\gamma_{--}(t)\pm \sqrt{[\gamma_{++}(t)-\gamma_{--}(t)]^2+4|\gamma_{+-}(t)|^2}}{2}.
\end{equation}
In Fig. 2 (right) we have represented the function $g(t)$ for different temperatures, obtaining a larger period of non-Markovianity at low temperatures. This behavior fit with the intuition regarding the width of the bath correlation functions, which in this case increases very rapidly as $T$ approaches zero \cite{MME}. The case of $T\rightarrow0$ is actually very relevant. In the inset plot of Fig. 2 (right) we have plotted $\lambda_{-}(t)$ for $T=0$ (bath in the vacuum). It becomes zero at long times because the refined Liouvillian $\mathcal{L}_Z(t)$ approaches to the standard weak coupling Liouvillian. This only has one nonzero decay rate at $T=0$: the one associated to the emission process related to $\lambda_+(t)$ for long times. Since the function $\lambda_{-}(t)$ remains negative for most of the time where the system evolves qualitatively, this can be thought as a form of ``quasieternal non-Markovianity''. The extreme case of ``eternal non-Markovianity'' introduced in \cite{MichaelHall} denotes the situation where the dynamics is non-Markovian for all time instant. We now see that the spin-boson model in the refined weak coupling limit presents a weak form of that case, where non-Markovianity is not kept eternally, but during the time period where the induced system change is mostly relevant.

\section{Conclusions} 
We have studied the non-Markovian features of the refined weak coupling limit proposed by Schaller and Brandes in \cite{Schaller} by applying it to the concrete example of the spin-boson model. Our conclusion is that this technique is not only able to account for highly non-Markovian effects, but that actually the spin-boson model presents a rich and new phenomenology of non-Markovianity. The amount of entanglement with an ancilla does not show revivals \cite{RHP}; however, the amount of coherence does. This surprising effect represents a new dynamical difference between entanglement and coherence. In addition, the system is more non-Markovian as the temperature decreases, and becomes non-Markovian for every time instant during the period of qualitative evolution for an environment at zero temperature. This effect recalls the case ``ethernal non-Markovianity'' theroretically proposed in \cite{MichaelHall}. We may see now that the ubiquitous spin-boson model can behave very similarly.

Furthermore, we find the time-dependent Lamb shift term, describing a reversible exchange of energy between system and environment. This is not detected by the standard weak coupling treatment where the environment leads only to irreversibilities.   

Besides the fundamental interest on these new effects, the large amount of controlled systems well described by the paradigmatic spin-boson model provides this work with a practical perspective. Thus, the results here reported are very suited to be verified experimentally in platforms of AMO and solid-state physics. 

\acknowledgments The author acknowledges to D. G\'omez for discussions on this topic. Partial financial support from Spanish MINECO grants FIS2015-67411, FIS2012-33152, a ``Juan de la Cierva-Incorporaci\'on" research contract, the CAM research consortium QUITEMAD+ S2013/ICE-2801 and the U.S. Army Research Office through grant W911NF-14-1-0103 is acknowledged.

\textit{Note Added}.--- While finalizing this work, L. Ferialdi obtained a different master equation for the spin-boson model based on an exact treatment \cite{Ferialdi}.

\appendix

\onecolumngrid

\section{Refined weak coupling Liouvillian for the spin-boson model} 
\label{app_A}
The Schaller-Brandes exponent $\mathcal{Z}(t)$ is a linear combination of the operators $\mathcal{Z}_z=[\sigma_z,\cdot]$,  $\mathcal{Z}_{+-}=\sigma_+(\cdot)\sigma_--\{\sigma_-\sigma_+,\cdot\}/2$, $\mathcal{Z}_{-+}=\sigma_-(\cdot)\sigma_+-\{\sigma_+\sigma_-,\cdot\}/2$, $\mathcal{Z}_{--} =\sigma_-(\cdot)\sigma_-$, and $\mathcal{Z}_{++}=\sigma_+(\cdot)\sigma_+$. These operators close a Lie algebra
\begin{align}
&[\mathcal{Z}_{+-},\mathcal{Z}_{-+} ]=\mathcal{Z}_{+-}-\mathcal{Z}_{-+},\quad [\mathcal{Z}_{++},\mathcal{Z}_{--} ]=\mathcal{Z}_z/2,\quad [\mathcal{Z}_z,\mathcal{Z}_{--} ]=4\mathcal{Z}_{++},\quad [\mathcal{Z}_z,\mathcal{Z}_{++} ]=-4\mathcal{Z}_{--},
\end{align}
with zero value for the rest of the cases. This, because of Eq. \eqref{LiouvillianFormula}, leads immediately to Eq. \eqref{LZ}. After a rather tedious but straightforward algebra the coefficients in the Liouvillian can be computed to be:
\begin{align}
\gamma_{++}&=\frac{1}{(\Gamma_{++}+\Gamma_{--})^2}\left\{\left[(e^{-(\Gamma_{++}+\Gamma_{--})}-1\right](\Gamma_{++}\dot{\Gamma}_{--}-\dot{\Gamma}_{++}\Gamma_{--})+(\dot{\Gamma}_{++}+\dot{\Gamma}_{--})(\Gamma^2_{++}+\Gamma_{++}\Gamma_{--})\right\},\\
\gamma_{--}&=\frac{1}{(\Gamma_{++}+\Gamma_{--})^2}\left\{\left[e^{-(\Gamma_{++}+\Gamma_{--})}-1\right](\dot{\Gamma}_{++}\Gamma_{--}-\Gamma_{++}\dot{\Gamma}_{--})+(\dot{\Gamma}_{++}+\dot{\Gamma}_{--})(\Gamma^2_{--}+\Gamma_{++}\Gamma_{--})\right\},
\end{align}
 \begin{multline}
\gamma_{+-}=\gamma_{-+}^\ast=\frac{1}{2 \left(|\Gamma_{+-}|^2-\Xi^2\right)}\left\{2\Gamma_{+-}[\Re(\dot{\Gamma}_{-+}\Gamma_{+-})-\dot{\Xi}\Xi]-i(\dot{\Gamma}_{+-}\Xi-\Gamma_{+-}\dot{\Xi})\left[1-\cosh(2\sqrt{|\Gamma_{+-}|^2-\Xi^2})\right]\right.\\
\left.+i\frac{\Gamma_{+-}\Im(\Gamma_{-+}\dot{\Gamma}_{+-})+\Xi(\Gamma_{+-}\dot{\Xi}-\dot{\Gamma}_{+-}\Xi)}{\sqrt{|\Gamma_{+-}|^2-\Xi^2}}\sinh(2\sqrt{|\Gamma_{+-}|^2-\Xi^2})\right\},
\end{multline}
 \begin{multline}
\Delta=\frac{1}{2 \left(|\Gamma_{+-}|^2-\Xi^2\right)}\left\{2\Xi[\Re(\dot{\Gamma}_{-+}\Gamma_{+-})-\dot{\Xi}\Xi]+\Im(\dot{\Gamma}_{-+}\Gamma_{+-})\left[1+\cosh(2\sqrt{|\Gamma_{+-}|^2-\Xi^2})\right]\right.\\
\left.+\frac{\Re(\dot{\Gamma}_{-+}\Gamma_{+-})\Xi-|\Gamma_{+-}|^2\dot{\Xi}}{\sqrt{|\Gamma_{+-}|^2-\Xi^2}}\sinh(2\sqrt{|\Gamma_{+-}|^2-\Xi^2})\right\}.
\end{multline}
Here, for the sake of compactness, we have omitted the $(t,T)$ dependence of the coefficients and denoted $dX/dt\equiv\dot{X}$. 

\section{Further Details About The Refined Weak Coupling Limit}
\label{app_B}

Consider the total Hamiltonian $H=H_S+H_E+V$ with the usual product initial condition $\rho(0)=\rho_S(0)\otimes\rho_E$, where $\rho_E$ is a stationary state of the environment $[H_E,\rho_E]=0$. In the interaction picture the reduced state at time $t$ is
\begin{equation}\label{coarsegrained1}
\tilde{\rho}_S(t)=\Tr_E\left[U(t,0)\rho_S(0)\otimes\rho_E U^\dagger(t,0)\right],
\end{equation}
where
\begin{equation}
  U(t,0)=\mathcal{T}e^{-i\int_0^t\tilde{V}(t')dt'}
\end{equation}
is the unitary propagator, $\tilde{X}(t)$ stands for the operator $X$ in the interaction picture, and $\mathcal{T}$ denotes the time-ordering operator. The propagator to the first non-trivial order in Eq. \eqref{coarsegrained1} gives
\begin{equation}
  \tilde{\rho}_S(t)=\rho_S(0)-\frac{1}{2}\mathcal{T}\int_0^tdt_1\int_0^tdt_2\Tr_E\left[\tilde{V}(t_1),\left[\tilde{V}(t_2),\rho_S(0)\otimes\rho_E\right]\right]+\mathcal{O}(V^3).
\end{equation}
Here we have made the common assumption that the first order term vanishes $\Tr[\tilde{V}(t)\rho_B]=0$ \cite{Libro}. From the definition of the time-ordering operation we obtain
\begin{align}\label{Z1Z2Z3}
\mathcal{T}\int_0^tdt_1\int_0^tdt_2\Tr_E\left[\tilde{V}(t_1),\left[\tilde{V}(t_2),\rho_S(0)\otimes\rho_E\right]\right]&=\int_0^tdt_1\int_0^tdt_2\theta(t_1-t_2)\Tr_E\left[\tilde{V}(t_1),\left[\tilde{V}(t_2),\rho_S(0)\otimes\rho_E\right]\right]\nonumber\\
&+\int_0^tdt_1\int_0^tdt_2\theta(t_2-t_1)\Tr_E\left[\tilde{V}(t_2),\left[\tilde{V}(t_1),\rho_S(0)\otimes\rho_E\right]\right]\nonumber\\
&\equiv2\mathcal{Z}_1[\rho_S(0)]+\mathcal{Z}_2[\rho_S(0)]+\mathcal{Z}_3[\rho_S(0)],
\end{align}
where, after expanding the double commutators, we find three kind of terms, $\mathcal{Z}_1$, $\mathcal{Z}_2$, and $\mathcal{Z}_3$. The first one is given by
\begin{align}
\mathcal{Z}_1[\rho_S(0)]&=-\int_0^tdt_1\int_0^tdt_2\theta(t_1-t_2)\Tr_E\left[\tilde{V}(t_1)\rho_S(0)\otimes\rho_E\tilde{V}(t_2)\right]+\theta(t_2-t_1)\Tr_E\left[\tilde{V}(t_1)\rho_S(0)\otimes\rho_E\tilde{V}(t_2)\right]\nonumber \\
&=-\int_0^tdt_1\int_0^tdt_2\Tr_E\left[\tilde{V}(t_1)\rho_S(0)\otimes\rho_E\tilde{V}(t_2)\right];
\end{align}
defining $W(t):=\int_0^t\tilde{V}(t')dt'$ we have
\begin{equation}
  \mathcal{Z}_1[\rho_S(0)]=-\Tr_E\left[W(t)\rho_S(0)\otimes\rho_EW(t)\right].
\end{equation}
The factor 2 in front of $\mathcal{Z}_1$ in Eq. \eqref{Z1Z2Z3} comes from another analogous term corresponding to the exchange $t_1\leftrightarrow t_2$ inside the double commutator.
The second term goes like
\begin{align}
\mathcal{Z}_2[\rho_S(0)]&=\int_0^tdt_1\int_0^tdt_2\theta(t_1-t_2)\Tr_E\left[\tilde{V}(t_1)\tilde{V}(t_2)\rho_S(0)\otimes\rho_E\right]+\theta(t_2-t_1)\Tr_E\left[\tilde{V}(t_2)\tilde{V}(t_1)\rho_S(0)\otimes\rho_E\right]\nonumber \\
&=\int_0^tdt_1\int_0^tdt_2[\theta(t_1-t_2)+\theta(t_2-t_1)]\Tr_E\left[\tilde{V}(t_2)\tilde{V}(t_1)\rho_S(0)\otimes\rho_E\right]\nonumber \\
&+\int_0^tdt_1\int_0^tdt_2\theta(t_1-t_2)\Tr_E\left\{[\tilde{V}(t_1),\tilde{V}(t_2)]\rho_S(0)\otimes\rho_E\right\}\nonumber\\
&=\Tr_E\left[W^2(t)\rho_S(0)\otimes\rho_E\right]+\int_0^tdt_1\int_0^tdt_2\theta(t_1-t_2)\Tr_E\left\{[\tilde{V}(t_1),\tilde{V}(t_2)]\rho_S(0)\otimes\rho_E\right\}.
\end{align}
Similarly the remaining term can be expressed as
\begin{equation}
\mathcal{Z}_3[\rho_S(0)]=\Tr_E\left[\rho_S(0)\otimes\rho_E W^2(t)\right]-\int_0^tdt_1\int_0^tdt_2\theta(t_1-t_2)\Tr_E\left\{\rho_S(0)\otimes\rho_E[\tilde{V}(t_1),\tilde{V}(t_2)]\right\}.
\end{equation}
Thus, everything together gives $\tilde{\rho}_S(t)\equiv\rho_S(0)+\mathcal{Z}(t)[\rho_S(0)]+\mathcal{O}(V^3)$ with
\begin{align}\label{ZSM1}
\mathcal{Z}(t)[\rho_S(0)]&=-\frac{1}{2}\left(2\mathcal{Z}_1[\rho_S(0)]+\mathcal{Z}_2[\rho_S(0)]+\mathcal{Z}_3[\rho_S(0)]\right)\nonumber\\
&=-i[\Lambda(t),\rho_S(0)]+\Tr_E\left[W(t)\rho_S(0)\otimes\rho_E W(t)-\frac{1}{2}\left\{W^2(t),\rho_S(0)\otimes\rho_E\right\}\right].
\end{align}
Here, the self-adjoint operator $\Lambda(t)$ is given by
\begin{align}\label{Lambda1}
\Lambda(t)&=\frac{1}{2i}\int_0^tdt_1\int_0^tdt_2\theta(t_1-t_2)\Tr_E\left\{[\tilde{V}(t_1),\tilde{V}(t_2)]\rho_E\right\} \nonumber \\
&=\frac{1}{2i}\int_0^tdt_1\int_0^tdt_2\sgn(t_1-t_2)\Tr_E\left[\tilde{V}(t_1)\tilde{V}(t_2)\rho_E\right],
\end{align}
where we have used the relation $\theta(x)=[1+\sgn(x)]/2$. By taking the spectral decomposition of $\rho_E$ we immediately check that for any fixed $t$, $\mathcal{Z}(t)$ has the GKSL form \cite{Koss-Lind}.

Since at first non-trivial order we have
\begin{equation} \label{coarsegrained2}
\tilde{\rho}_S(t)=[\mathds{1}+\mathcal{Z}(t)]\rho_A(0)+\mathcal{O}(V^3)\simeq e^{\mathcal{Z}(t)}\rho_A(0),
\end{equation}
the refined weak coupling dynamics given by $e^{\mathcal{Z}(t)}$ is a completely positive dynamical map that approaches the exact one at the short time scale. Furthermore, Schaller and Brandes \cite{Schaller} proved that for large times $e^{\mathcal{Z}(t)}$ provides a consistent second order approximation that becomes closer to the usual weak coupling dynamics. For the sake of completeness we shall reproduce their result in the following subsections.

\subsection{Schaller-Brandes Exponent in Terms of the $H_S$ Eigenoperators}
The interaction Hamiltonian can always be written in the form
\begin{equation}\label{VAB}
V=\sum_{k}A_k\otimes B_k,
\end{equation}
where $A_k^\dagger =A_k$, $B_k^\dagger=B_k$ are seft-adjoint operators of system and environment, respectively \cite{Libro}. Now, assuming for the sake of simplicity that the system Hamiltonian $H_S$ has discrete spectra and $|\epsilon\rangle$ are the associated eigenstates $H_S|\epsilon\rangle=\epsilon|\epsilon\rangle$, we define
\begin{equation}
A_k(\omega)=\sum_{\epsilon-\epsilon'=\omega}|\epsilon\rangle\langle\epsilon|A_k|\epsilon'\rangle\langle\epsilon'|,
\end{equation}
where the summation runs over every pair of energies $\epsilon$ and $\epsilon'$ such that their difference is $\omega$. The operators $A_k(\omega)$ so defined are in fact eigenoperators of the superoperator $[H_S,\cdot]$ with eigenvalue $-\omega$:
\begin{equation}
[H_S,A_k(\omega)]=-\omega A_k(\omega),
\end{equation}
so that in the interaction picture $\tilde{A}_k(\omega,t)=e^{-i\omega t}A_k(\omega)$. Moreover, they satisfy the following properties (see, e.g. \cite{Libro}):
\begin{align}
&A_k^\dagger(\omega)=A_k(-\omega),\\
&\sum_{\omega}A_k(\omega)=\sum_{\omega}A_k^\dagger(\omega)=A_k,\label{sumAk}\\
&[H_S,A_k^\dagger(\omega)A_l(\omega)]=0.
\end{align}
Thus the interaction Hamiltonian in the interaction picture can be written as
\begin{equation}\label{VAomega}
\tilde{V}(t)=\sum_{\omega,k}e^{-i\omega t}A_k(\omega)\otimes \tilde{B}_k(t)=\sum_{\omega,k}e^{i\omega t}A_k^\dagger(\omega)\otimes \tilde{B}_k(t).
\end{equation}
Using these decompositions in Eq. \eqref{Lambda1} we obtain
\begin{equation}
  \Lambda(t)=\sum_{\omega,\omega'}\sum_{k,l}\Xi_{kl}(\omega,\omega',t)A_{k}^{\dagger}(\omega)A_{l}(\omega'),
\end{equation}
with
\begin{align}\label{XiSM1}
\Xi_{kl}(\omega,\omega',t)=\frac{1}{2i}\int_{0}^{t}dt_1\int_{0}^{t}dt_2\sgn(t_1-t_2)e^{i(\omega t_1-\omega' t_2)}\Tr\left[\tilde{B}_k(t_1-t_2)B_l\rho_E\right].
\end{align}
Here, we have assumed that the environment is in a stationary state $[H_E,\rho_E]=0$ so that the environmental correlation functions just depend on the time difference $(t_1-t_2)$. Similarly, we can write the non-Hamiltonian part of \eqref{ZSM1} in terms of eigenoperators $A_k(\omega)$ so that Schaller-Brandes exponent yields
\begin{align}
  \mathcal{Z}(t)[\rho_S(0)]=-i[\Lambda(t),\rho_S(0)]+\sum_{\omega,\omega'}\sum_{k,l} \Gamma_{kl}(\omega,\omega',t)\big[A_l(\omega')\rho_S(0) A_k^\dagger(\omega)-\tfrac{1}{2}\{A_k^\dagger(\omega) A_l(\omega'),\rho_S(0) \}\big],
\end{align}
with:
\begin{equation}\label{GammaSM1}
\Gamma_{kl}(\omega,\omega',t)=\int_{0}^{t}dt_1\int_{0}^{t}dt_2 e^{i(\omega t_1-\omega' t_2)}\Tr\left[\tilde{B}_k(t_1-t_2)B_l\rho_E\right].
\end{equation}

\subsection{Long time limit}
In order to study the behavior of $\mathcal{Z}(t)$ for long times we first prove a preliminary lemma.
\begin{lem} The following identity holds true in the distributional sense:
\begin{equation}
\lim_{t\rightarrow\infty}t\ \sinc\left[\tfrac{(\omega+a)t}{2}\right]\sinc\left[\tfrac{(\omega+b)t}{2}\right]=2\pi\delta_{a,b}\delta(\omega+a).
\end{equation}
Namely, for any (sufficiently well-behaved) test function $f(\omega)$ we have
\begin{equation}\label{distribution0}
   \lim_{t\rightarrow\infty} \int_I f(\omega) t\ \sinc\left[\tfrac{(\omega+a)t}{2}\right]\sinc\left[\tfrac{(\omega+b)t}{2}\right]d\omega=2\pi\delta_{a,b}f(-a),
 \end{equation}
for $-a\in I$, and zero otherwise.
\end{lem}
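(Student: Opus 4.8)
The plan is to test the left-hand side against a (bounded, integrable) function $f$ and to treat separately the diagonal case $a=b$ and the off-diagonal case $a\neq b$, which will produce respectively the factors $f(-a)$ and $\delta_{a,b}$ in the claim. The common starting point is the elementary rewriting
\begin{equation}\label{plan-rewrite}
t\,\sinc\!\left[\tfrac{(\omega+a)t}{2}\right]\sinc\!\left[\tfrac{(\omega+b)t}{2}\right]=\frac{4\sin\!\left[\tfrac{(\omega+a)t}{2}\right]\sin\!\left[\tfrac{(\omega+b)t}{2}\right]}{t(\omega+a)(\omega+b)},
\end{equation}
which isolates the explicit $1/t$ factor that will govern the off-diagonal decay. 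I will denote the left-hand side by $g_t(\omega)$.

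First, for $a=b$ the integrand reduces to the positive Fej\'er-type kernel $t\,\sinc^2[(\omega+a)t/2]$. I would substitute $u=(\omega+a)t/2$, turning $\int_I f\,g_t\,d\omega$ into $\int 2\,f(-a+2u/t)\,\sinc^2(u)\,du$ over a $u$-range that, as $t\to\infty$, exhausts all of $\mathbb{R}$ precisely when $-a$ lies in the interior of $I$. Using $f(-a+2u/t)\to f(-a)$, the normalization $\int_{-\infty}^{\infty}\sinc^2(u)\,du=\pi$, and dominated convergence (with $2\|f\|_\infty\min(1,u^{-2})$ as the integrable majorant), this limit equals $2\pi f(-a)$; if $-a\notin\overline{I}$ the $u$-domain stays bounded away from the origin and the integral tends to $0$. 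This reproduces $2\pi\delta_{a,a}f(-a)$.

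Second, for $a\neq b$ I would show the integral vanishes, again by dominated convergence applied directly to $g_t$. From \eqref{plan-rewrite}, for fixed $\omega\neq -a,-b$ the numerator is bounded by $4$ while the denominator grows like $t$, so $g_t(\omega)=O(1/t)\to0$ pointwise almost everywhere. The essential ingredient is a \emph{uniform} bound $|g_t(\omega)|\le C_{a,b}$ for all $t\ge 1$: away from both peaks one estimates $|\sin|\le1$ directly; in a neighbourhood of $\omega=-a$ of radius $<|a-b|/2$ one keeps $|\sinc[(\omega+a)t/2]|\le1$ but bounds the second factor via $|\sinc(y)|\le|y|^{-1}$, which cancels the explicit $t$ and leaves a bound of order $|a-b|^{-1}$ (symmetrically near $\omega=-b$). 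With this domination and the a.e.\ pointwise limit, dominated convergence gives $\int_I f\,g_t\,d\omega\to 0=2\pi\delta_{a,b}f(-a)$, completing the two cases.

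The hard part will be exactly this uniform bound in the off-diagonal case: naively $g_t$ carries the prefactor $t$ and each of the two peaks reaches height of order one, so one must exhibit the cancellation that forbids any blow-up before dominated convergence can be invoked. Note that $g_t(-a)$ itself does not converge (it oscillates like $\sin[(a-b)t/2]$), but this is harmless since it happens on a null set. A secondary point needing care is the boundary behaviour when $-a$ or $-b$ sits on $\partial I$; I would dispatch it by observing that it merely rescales the limit by the fraction of the $\sinc^2$ mass captured, which does not affect the stated identity whose hypothesis places $-a$ in the interior of $I$.
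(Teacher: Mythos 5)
Your proof is correct, but both halves take a genuinely different route from the paper's, even though dominated convergence is the engine in each. For $a\neq b$, the paper decomposes $\tfrac{1}{(\omega+a)(\omega+b)}$ by partial fractions into $\tfrac{1}{b-a}\bigl[\tfrac{1}{\omega+a}-\tfrac{1}{\omega+b}\bigr]$; each resulting term then carries only one singular factor and is uniformly dominated by $|f(\omega)|/2$ via the elementary bound $\bigl|\tfrac{\sin(x/2)}{x}\bigr|\leq\tfrac{1}{2}$, so the uniform majorant you correctly single out as ``the hard part'' comes for free from the algebra. Your alternative --- splitting the domain into neighbourhoods of the two peaks, where $|\sinc(y)|\leq|y|^{-1}$ cancels the prefactor $t$, and the complement, where $t^{-1}$ decay wins --- builds the same majorant by hand; it is more laborious but makes the cancellation mechanism explicit and needs no algebraic identity. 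For $a=b$, the paper adds and subtracts $f(-a)$, evaluates $\int_I t\,\sinc^2\bigl[\tfrac{(\omega+a)t}{2}\bigr]d\omega\to2\pi$ by integration by parts (reducing it to $\int_0^\infty\sinc(x)\,dx=\tfrac{\pi}{2}$), and kills the remainder using a majorant $\bigl|\tfrac{f(\omega)-f(-a)}{\omega+a}\bigr|$ that requires differentiability of $f$ at $-a$; you instead rescale $u=(\omega+a)t/2$ and invoke the approximate-identity property of the Fej\'er-type kernel together with $\int_{-\infty}^{\infty}\sinc^2(u)\,du=\pi$. Your version asks only for boundedness and continuity of $f$ at $-a$ rather than differentiability, so it is marginally more general, at the price of quoting the value of $\int\sinc^2$ rather than deriving the kernel mass; both treatments of the boundary and null-set subtleties (including your observation that $g_t(-a)$ oscillates without converging) are sound.
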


\begin{proof} Let $f(\omega)$ be a differentiable function with compact support $I=(-\omega_0,\omega_0)$. Suppose $a\neq b$, using that $\sinc(x)=\sin(x)/x$ and decomposing in partial fractions we obtain
\begin{multline}\label{distribution1}
\lim_{t\rightarrow\infty} \int_I f(\omega) t\ \sinc\left[\tfrac{(\omega+a)t}{2}\right]\sinc\left[\tfrac{(\omega+b)t}{2}\right]d\omega= \lim_{t\rightarrow\infty}\frac{4}{(b-a)} \int_I f(\omega)  \left\{\frac{ \sin[\tfrac{(\omega+a)t}{2}]\sin[\tfrac{(\omega+b)t}{2}]}{t(\omega+a)}\right.\\
- \left.\frac{ \sin[\tfrac{(\omega+a)t}{2}]\sin[\tfrac{(\omega+b)t}{2}]}{t(\omega+b)}\right\}d\omega.
\end{multline}
Since $\big|\tfrac{\sin(x/2)}{x}\big|\leq\tfrac{1}{2}$ each of both integrands on the right hand side are dominated by $|f(\omega)|/2$, which is integrable in $I$. Then Lebesgue's dominated convergence theorem \cite{BoccaraSM} allows us to exchange the limit and the integral sign obtaining straightforwardly zero integrals.

Consider now the case $a=b$. Then, we have
\begin{equation} \label{distribution2}
    \lim_{t\rightarrow\infty} \int_I f(\omega) t\ \sinc^2\left[\frac{(\omega+a)t}{2}\right]d\omega=\lim_{t\rightarrow\infty} \left\{\int_I [f(\omega)-f(-a)] t\ \sinc^2\left[\frac{(\omega+a)t}{2}\right]d\omega+f(-a)\int_I t\ \sinc^2\left[\frac{(\omega+a)t}{2}\right]d\omega\right\},
\end{equation}
where we have added and subtracted $f(-a)$. Integrating by parts the last integral of the right hand side yields
\begin{equation}
     \lim_{t\rightarrow\infty}\int_I t\ \sinc^2\left[\frac{(\omega+a)t}{2}\right]d\omega=2\lim_{t\rightarrow\infty}\left\{\left.\frac{\cos[t (\omega+a)]-1}{t (\omega+a)}\right|_{-\omega_0}^{\omega_0}+\left.\int_0^{t(\omega+a)}\sinc(x)dx\right|_{-\omega_0}^{\omega_0}\right\}=2\pi,
\end{equation}
for $-a\in I$, as $\int_0^\infty\sinc (x) dx=\tfrac{\pi}{2}$. It is also straightforwardly checked that if $-a\notin I$, the integral vanishes. Finally, the remaining integral in Eq. \eqref{distribution2} is
\begin{equation}
\lim_{t\rightarrow\infty} \int_I [f(\omega)-f(-a)] t\ \sinc^2\left[\frac{(\omega+a)t}{2}\right]d\omega=\lim_{t\rightarrow\infty}4\int_I \frac{[f(\omega)-f(-a)]}{(\omega+a)} \frac{\sin^2\left[\frac{(\omega+a)t}{2}\right]}{(\omega+a)t}d\omega.
\end{equation}
The above integrand is dominated by the function $\left|\frac{[f(\omega)-f(-a)]}{(\omega+a)}\right|$, which has no problem in $\omega=-a$ because $f(\omega)$ is supposed to be differentiable everywhere. Therefore, the exchange of the limit and the integral sign gives the zero value.

All of this is equally applicable to a sufficiently fast decaying function $f(\omega)$ but not necessarily with compact support. For that case we may split the integration interval in three subintervals $I=(-\infty,-\omega_0)\cup(-\omega_0,\omega_0)\cup(\omega_0,\infty)$ with $-a\in(-\omega_0,\omega_0)$. The integrals on $(-\infty,-\omega_0)$ and $(\omega_0,\infty)$ become zero due to Lebesgue's dominated convergence theorem for sufficiently fast decaying $f(\omega)$.
\end{proof}

Let us now reproduce the Schaller and Brandes result \cite{Schaller} for the long time limit of $\mathcal{Z}(t)$. Consider $\Gamma_{kl}(\omega,\omega',t)$ in Eq. \eqref{GammaSM1},
\begin{equation}\label{GammaSM2}
\Gamma_{kl}(\omega,\omega',t)=\int d\upsilon \int_{0}^{t}dt_1\int_{0}^{t}dt_2 e^{i[(\omega-\upsilon) t_1-(\omega'-\upsilon) t_2]} \Tr\left[B_k(\upsilon)B_l\rho_E\right],
\end{equation}
where we have used the decomposition $B_k=\int d\upsilon B_k(\upsilon)$ in terms of eigenoperators $B_k(\upsilon)$ of $H_E$ with frequency $\upsilon$. This is similar to Eq. \eqref{sumAk} but the sum is here substituted by an integral since the environment is assumed to have an infinite (potentially continuous) number of degrees of freedom. Performing the integrals $\int_{0}^{t}ds e^{i x s}=t e^{ixt/2}\sinc(xt/2)$ we obtain
\begin{equation}
\Gamma_{kl}(\omega,\omega',t)=\int d\upsilon t^2 \exp\left[i\tfrac{(\omega-\omega')t}{2}\right]\sinc\left[\tfrac{(\omega-\upsilon)t}{2}\right]\sinc\left[\tfrac{(\omega'-\upsilon)t}{2}\right]\Tr\left[B_k(\upsilon)B_l\rho_E\right].
\end{equation}
Therefore, due to the Lemma above, we can assert that
\begin{equation}
  \lim_{t\rightarrow\infty}\frac{\Gamma_{kl}(\omega,\omega',t)}{t}=2\pi\delta_{\omega,\omega'}\Tr\left[B_k(\upsilon)B_l\rho_E\right].
\end{equation}
The quantity $\gamma_{kl}:=2\pi\Tr\left[B_k(\upsilon)B_l\rho_E\right]$ is just the decay rate in the standard weak coupling limit \cite{Libro} and $ \delta_{\omega,\omega'}$ performs the secular approximation.

For the Hamiltonian part one needs a bit more effort. Firstly, we introduce the decomposition $B_k=\int d\upsilon B_k(\upsilon)$ in the expression $\sgn(t_1-t_2)\Tr\left[\tilde{B}_k(t_1-t_2)B_l\rho_E\right]$, obtaining:
\begin{equation}
\int d\upsilon\ \sgn(\tau)e^{-i\upsilon\tau}\left[B_k(\upsilon)B_l\rho_E\right],
\end{equation}
with $\tau=t_1-t_2$. Now we take Fourier transform with respect to $\tau$,
\begin{equation}
\int d\upsilon \int_{-\infty}^{\infty} d\tau \ \sgn(\tau)e^{i(\varphi-\upsilon)\tau}\left[B_k(\upsilon)B_l\rho_E\right].
\end{equation}
A well-known result in distribution theory says that the Fourier transform of the sign function $\sgn(\tau)$ in the distributional sense is $2i$ times the Cauchy principal value distribution \cite{BoccaraSM}, namely
\begin{equation}
2i{\rm P.V.}\int d\upsilon \frac{\left[B_k(\upsilon)B_l\rho_E\right]}{\varphi-\upsilon}.
\end{equation}
Therefore, by taking inverse Fourier transform, we find the relation
\begin{equation}\label{auxSM1}
\sgn(\tau)\Tr\left[\tilde{B}_k(\tau)B_l\rho_E\right]
=\frac{i}{\pi}\int_{-\infty}^{\infty} d\varphi e^{-i\varphi\tau} {\rm P.V.}\int d\upsilon  \frac{\left[B_k(\upsilon)B_l\rho_E\right]}{\varphi-\upsilon}.
\end{equation}
This equality combined with Eq. \eqref{XiSM1} yields
\begin{align}
\Xi_{kl}(\omega,\omega',t)=\frac{1}{2\pi}\int_{-\infty}^{\infty} d\varphi \int_{0}^{t}dt_1\int_{0}^{t}dt_2e^{i[(\omega-\varphi) t_1-(\omega'-\varphi) t_2]}  {\rm P.V.}\int d\upsilon  \frac{\left[B_k(\upsilon)B_l\rho_E\right]}{\varphi-\upsilon}.
\end{align}
Finally, by following the same steps as in Eq. \eqref{GammaSM2} for $\Gamma_{kl}(\omega,\omega',t)$, it is straightforward to prove that
\begin{equation}
\lim_{t\rightarrow\infty}\frac{\Xi_{kl}(\omega,\omega',t)}{t}=\delta_{\omega,\omega'}{\rm P.V.}\int d\upsilon  \frac{\left[B_k(\upsilon)B_l\rho_E\right]}{\omega-\upsilon}.
\end{equation}
Here $S_{kl}(\omega):={\rm P.V.}\int d\upsilon  \tfrac{\left[B_k(\upsilon)B_l\rho_E\right]}{\omega-\upsilon}$ are the shifts obtained in the standard weak coupling limit \cite{Libro} and $ \delta_{\omega,\omega'}$ performs the secular approximation, as commented.

Summarizing, we have obtained that $\lim_{t\rightarrow\infty}\mathcal{Z}(t)/t=\mathcal{L}_D$ where $\mathcal{L}_D$ is the Liouvillian of the standard weak coupling limit under the Born-Markov-secular approximation. Hence for long times both dynamics are the same $e^{\mathcal{Z}(t)}\simeq e^{\mathcal{L}_Dt}$.

\subsection{Schaller-Brandes Exponent for the Spin-Boson Model}
\label{app_B.1}
The interaction Hamiltonian reads $V=\sum_l A_l\otimes B_l=A_1\otimes B_1$, for $A_1=\sigma_x$ and $B_1=\sum_k g_k (a_k+a_k^\dagger)$. Moreover we have that
\begin{equation}
A_1=A_1(-\omega_0)+A_1(\omega_0), \text{ and }
B_1=\sum_k B_1(\omega_k)+B_1(-\omega_k),
\end{equation}
with $A_1(\mp\omega_0)=\sigma_\pm$, and $B_1(\omega_k)=g_k a_k$ and $B_1(-\omega_k)=g_k a_k^\dagger$, the eigenoperators of $[H_S,\cdot]$ and $[H_E,\cdot]$, respectively.

Considering the environmental modes to be in thermal equilibrium, $\rho_E=\rho_\beta=e^{-\beta H_S}/\Tr(e^{-\beta H_S})$, the bath correlation functions become
\begin{align}\label{CorrSM}
\Tr\left[\tilde{B}_1(t_1-t_2)B_1\rho_E\right]&=
\sum_k g_k^2 \left\{ e^{-i\omega_k(t_1-t_2)}\Tr\left[a_ka_k^\dagger\rho_\beta\right] + e^{i\omega_k(t_1-t_2)}\Tr\left[a_k^\dagger a_k \rho_\beta\right]\right\}\nonumber\\
&=\int _0^\infty d\omega J(\omega) \{e^{-i\omega (t_1-t_2)}[n_T(\omega)+1]+e^{i\omega (t_1-t_2)}n_T(\omega)\},
\end{align}
where $\bar{n}_T(\omega)=\Tr(a_k^\dagger a_k\rho_\beta)=[\exp(\omega/T)-1]^{-1}$, and we have taken the continuous limit in the environmental modes by introducing the bath spectral density $J(\omega)\sim\sum_{k}g_k^2 \delta(\omega-\omega_k)$.

Thus, the coefficient $\Gamma(\omega,\omega',t)$ for $\omega=\omega'=-\omega_0$ in \eqref{GammaSM1} is
\begin{align}
\Gamma(-\omega_0,-\omega_0,t)\equiv \Gamma_{++}(t,T)&=\int_0^\infty d\omega J(\omega)\int_0^t dt_1\int_0^t dt_2 e^{-i\omega_0 (t_1-t_2)} \{e^{-i\omega (t_1-t_2)}[n_T(\omega)+1]+e^{i\omega (t_1-t_2)}n_T(\omega)\}\nonumber\\
&=\int_0^\infty d\omega t^2J(\omega)\left\{[\bar{n}_T(\omega)+1]\mathrm{sinc}^2\left[\tfrac{(\omega_0+\omega)t}{2}\right]+\bar{n}_T(\omega)\mathrm{sinc}^2\left[\tfrac{(\omega_0-\omega)t}{2}\right]\right\}.
\end{align}
Similarly, the remaining coefficients written in the main text are $\Gamma_{--}(t,T)=\Gamma(\omega_0,\omega_0,t)$ and $\Gamma_{+-}(t,T)=\Gamma_{-+}^\ast(t,T)=\Gamma(-\omega_0,\omega_0,t)$.

For the shifts, we introduce \eqref{CorrSM} in Eq. \eqref{XiSM1} and use the Fourier transform to substitute the function $\sgn(t_1-t_2)$ in terms of the principal value of the integral as in Eq. \eqref{auxSM1}. Then by taking into account that $\sigma_+\sigma_-=(1+\sigma_z)/2, \sigma_-\sigma_+=(1-\sigma_z)/2$, and $\sigma_+^2=\sigma_-^2=0$, the Hermitian part of the Schaller-Brandes exponent can be written as
\begin{equation}
\Lambda(t)=\left[\frac{\Xi(\omega_0,\omega_0,t)-\Xi(-\omega_0,-\omega_0,t)}{2}\right]\sigma_z\equiv\Xi(t,T)\sigma_z,
\end{equation}
with
\begin{align}
&\Xi(\omega_0,\omega_0,t)=\frac{1}{2\pi}\int_{-\infty}^\infty d\omega t^2\sinc^2\left[\tfrac{(\omega_0-\omega)t}{2}\right]\left\{\mathrm{P.V.}\int_0^\infty d\upsilon J(\upsilon)\left[\frac{\bar{n}_T(\upsilon)+1}{\omega-\upsilon}+\frac{\bar{n}_T(\upsilon)}{\omega+\upsilon}\right]\right\},\\
&\Xi(-\omega_0,-\omega_0,t)=\frac{1}{2\pi}\int_{-\infty}^\infty d\omega t^2\sinc^2\left[\tfrac{(\omega_0+\omega)t}{2}\right]\left\{\mathrm{P.V.}\int_0^\infty d\upsilon J(\upsilon)\left[\frac{\bar{n}_T(\upsilon)+1}{\omega-\upsilon}+\frac{\bar{n}_T(\upsilon)}{\omega+\upsilon}\right]\right\}.
\end{align}

\twocolumngrid

\end{document}